\newtheorem{theorem}{Theorem}%[chapter]%[section]
\newtheorem{lem}{Lemma}
\newtheorem{cor}{Corollary}
\newtheorem{prop}{Proposition}
\newtheorem{defn}{Definition}
\newtheorem{example}{Example}
\newcommand{\res}[2]{\mathrm{res}_{#1} \left( #2 \right)}
\newcommand{\supp}{{\rm supp}}
\def\g{\gamma}
\newcommand{\X}{{\mathcal X}}
\newcommand{\M}{\mathcal M}
\def\C{\mathbb C}
\def\F{\mathbb F}
\def\Aut{\mbox{Aut }}
\def\PAut{\mbox{PAut }}
\def\MAut{\mbox{MAut }}
\def\GAut{\Gamma\mbox{Aut }}
\def\iso{\cong}
\def\a{\alpha}
\def\P{\mathbb P}
\def\<{\langle}
\def\>{\rangle}
\def\s{\sigma}
\def\N{\mathcal N}
\def\H{\mathcal H}
\def\L{\mathcal L}
\def\x{\textbf{x}}
\newcommand\Z{\mathbb Z}
\def\P{\mathbb P}
\title{On the automorphism groups of some AG-codes based on $C_{a, b}$ curves}
\author{T. Shaska}
\address{Department of Mathematics and Statistics, Oakland University, Rochester, MI, 48309-4485, USA.\\
}
\author{H. Wang}
\address{Department of Mathematics, School of Science, Beihang University, Beijing 100083, P. R. China. }
\keywords{$C_{a, b}$ curves, AG-codes, automorphism groups, superelliptic curves}
\subjclass[2010]{Primary: 11T71, 14G50,  Secondary: 94A05, 14Q05}
\begin{document}

\maketitle

%\vspace{-1.5cm}

\begin{abstract}
We study $C_{a, b}$ curves and their applications to coding theory.    We show how $C_{a, b}$ curves can be
used to construct MDS codes and focus on some $C_{a, b}$ curves with extra automorphisms, namely $y^3=x^4+1,
y^3=x^4-x, y^3-y=x^4$. The automorphism groups of such codes are determined in most characteristics.
\end{abstract}

%\tableofcontents

%***********************************************
\section{Introduction}
%************************************************

In the design of new codes algebraic geometry codes (AG-codes), also known as Goppa codes, play an important part
and have been well studied in the last few decades. In designing such codes an important fact is the number of
points of the algebraic curve over a finite field. Hence, it is natural that the algebraic curves that have been
used so far are curves for which such number of points can be computed. Is there a "nice" family of curves which
can be used to construct good codes? Hermitian curves have been used successfully by many authors in addition to
hyperelliptic curves and other families of curves. The most natural curves are superelliptic curves as shown in \cite{e-sh, i-sh, shor-sh,  sh-38, book }

 In this paper we investigate a   family of curves which belongs to superelliptic curves, namely the $C_{a, b}$ curves. $C_{ab}$ curves are algebraic curves with very
interesting arithmetic properties. There are algorithms suggested which count the number of points of these curves
using the Monsky-Washnitzer cohomology; see \cite{Denef}. In this paper we study how these properties can be used
in constructing good algebraic geometry codes.

In Section 2,   we give a brief introduction to algebraic geometry codes (AG-codes).  This is well-known material. As a standard reference we use \cite{coding-book}.  Many other excellent resources exist.

In Section 3, we briefly define $C_{ab}$ curves. Such
curves have degree $a, b$ covers to $\P^1$. The existence of certain divisors makes these curves useful in coding
theory. For more details on AG-codes and quantum AG-codes one can check \cite{sh-38, e-sh, sh-25, sh-35}.  

In Section 4, we study the locus of genus $g$, $C_{a, b}$ curves  for fixed $a, b$. Such curves have degree $a, b$
covers to $\P^1$. Such covers are classified according to the ramification structure. We assume that the cover
has the largest possible moduli dimension. This determines a ramification structure $\s$. The simplest case of $C_{a, b}$ curves are hyperelliptic curves and superelliptic curves.  Such curves have been studied in detail by many authors and are well understood. They are the main classes of curves being used in coding theory and cryptography. 

Denote the moduli spaces of these maximal moduli dimension degree $a, b$ coverings by $\M_a$ and $\M_b$
respectively and let $g = \frac 1 2 (a-1) (b-1)$. Then $\M_a, \M_b$ are algebraic varieties of $\M_g$ (not
necessarily irreducible). The locus of $C_{a, b}$ curves in $\M_g$ is the intersection $\M_a \cap \M_b$.
Studying this locus is the focus of section 3.

In the last section we use $C_{a, b}$ curves of genus 3 to construct AG-codes. Such codes are MDS codes. We
focus on some genus 3 $C_{ab}$ curves with extra automorphisms, namely $y^3=x^4+1, y^3=x^4-x, y^3-y=x^4$. The
automorphism groups of such codes are determined for some characteristics.

This is an updated version of a small note from 2006. Connections to superelliptic curves are added and an updated list of references. 

\bigskip

\noindent \textbf{Notation:} Throughout this paper $\X$ will denote a smooth, projective curve defined over some
field $F$. By $\Aut (\X)$ we denote the group of automorphisms of $\X$ defined over $\bar F$. By $C$ we will
denote a linear code. The permutation automorphism group of $C$ will be denoted by $\PAut (C)$, the monomial
automorphism group by $\MAut (C)$, and the automorphism group by $\GAut (C)$. $\F_q$ denotes a finite field of $q$
elements.

%***********************************
%\newpage
\section{Preliminaries}
%%%%%%%%%%%%%%%%%%%%%%%%%%%%%%%%%%%%%%%%
Let $\F_q$ be a finite field of size $q$ and $\X$ a genus $g\geq 2$  algebraic curve defined over $\F_q$.  Let 
 $F$ be the  function field of $\X$ and  $P_1, \ldots, P_n$ be points of multiplicity on in $\X$

We take divisors   $D = P_1 + \cdots + P_n$  and  $G$ such that  $\supp(G) \cap \supp(D) = \emptyset$. In addition $\L (G)$ denotes the Riemann-Roch space for the divisor $G$. The \textit{algebraic geometry code}     $C_\L \subseteq \F_q^n$ is defined by
\begin{eqnarray*}
    C_\L(D,G) = \left\{ (f(P_1), \ldots , f(P_n)) \; | \; f \in \L(G) \right\} \subseteq \F_q^n
\end{eqnarray*}
The following linear map is called the  \textit{evaluation map}
  \begin{eqnarray*}
    \varphi: & &
    \left\{
    \begin{array}{crl}
      \L(G) & \rightarrow & \F_q^n\\
      f & \mapsto & (f(P_1), \ldots, f(P_n)).
    \end{array}
    \right.
  \end{eqnarray*}
Thus, the code is given by 
\[C_\L(D,G) = \varphi (\L(G)).\]   
It is a linear code  $[n, k, d]$   with parameters
\[
\begin{split}
    k & =  \dim G - \dim (G-D), \\
     d  & \ge  n - \deg G . 
\end{split}
\]
%
%The parameter $d_{des}$ is called the \textit{designed distance} of the Goppa code. 
The following result is  well known, see \cite[Thm. II.2.3]{st} among many others.

\begin{lem}
If $\deg G < n$, then
\begin{enumerate}
  \item $ \varphi: \L(G) \rightarrow C_\L(D,G)$ is injective and
    $C_\L(D,G)$ is an $[n,k,d]$ code with
\[       k  =  \dim G \ge \deg G + 1 - g,  \]
and 
\[  d  \ge  n - \deg G. \]
  \item If in addition $2g-2 < \deg G < n$, then 
  \[ k  = \deg G + 1 - g.\]
  \item If $(f_1, \ldots, f_k)$ is a basis of $\L(G)$, then
    \begin{eqnarray*}
      M & = &
      \left(
      \begin{array}{ccc}
        f_1(P_1) & \cdots & f_1(P_n)\\
        \vdots   &        & \vdots\\
        f_k(P_1) & \cdots & f_k(P_n)\\
      \end{array}
      \right)
    \end{eqnarray*}
    is a {\it generator matrix} for $C_{\mathcal L}(D,G)$.
  \end{enumerate}
\end{lem}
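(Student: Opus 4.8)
The plan is to deduce all three parts from the Riemann--Roch theorem together with a careful analysis of the kernel of the evaluation map $\varphi$. First I would identify $\ker \varphi$. Since $\supp(G) \cap \supp(D) = \emptyset$, every $f \in \codeL(G)$ is regular at each $P_i$, so $\varphi(f) = 0$ precisely when $v_{P_i}(f) \ge 1$ for all $i$; equivalently $(f) \ge D - G$, that is $f \in \codeL(G - D)$. Thus $\ker \varphi = \codeL(G - D)$. Because $\deg(G - D) = \deg G - n < 0$ by hypothesis, any nonzero $f$ in this space would force $0 = \deg(f) \ge \deg(D - G) = n - \deg G > 0$, a contradiction; hence $\codeL(G - D) = 0$ and $\varphi$ is injective. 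This gives $k = \dim_{\F_q} C_\codeL(D,G) = \dim \codeL(G) = \dim G$, and the lower bound $\dim G \ge \deg G + 1 - g$ is exactly Riemann's inequality.

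Next I would establish the distance bound in (1). Take a nonzero codeword $\varphi(f)$ of weight $w$, and let $P_{i_1}, \ldots, P_{i_{n-w}}$ be the places among $P_1, \ldots, P_n$ at which $f$ vanishes. Setting $D' = P_{i_1} + \cdots + P_{i_{n-w}}$, the function $f$ lies in $\codeL(G - D')$ and is nonzero, so $\deg(G - D') \ge 0$, i.e. $n - w \le \deg G$. Hence $w \ge n - \deg G$ for every nonzero codeword, which is the claim $d \ge n - \deg G$.

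For (2) I would apply the full Riemann--Roch theorem in the form $\dim G - \dim(K - G) = \deg G + 1 - g$, where $K$ is a canonical divisor with $\deg K = 2g - 2$. The extra hypothesis $2g - 2 < \deg G$ gives $\deg(K - G) < 0$, whence $\dim(K - G) = 0$ and the inequality in (1) becomes the equality $k = \dim G = \deg G + 1 - g$. Finally, part (3) is immediate from the injectivity established above: if $f_1, \ldots, f_k$ is an $\F_q$-basis of $\codeL(G)$, then since $\varphi$ is an injective $\F_q$-linear map its images $\varphi(f_i) = (f_i(P_1), \ldots, f_i(P_n))$ form a basis of $C_\codeL(D,G) = \varphi(\codeL(G))$; these are precisely the rows of $M$, so $M$ is a generator matrix.

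As for difficulty, the entire argument is routine once Riemann--Roch is invoked, so there is no deep obstacle. The only point requiring genuine care is the identification $\ker \varphi = \codeL(G - D)$ and the parallel counting-of-zeros argument for the distance; both hinge on the disjoint-support hypothesis guaranteeing that the values $f(P_i)$ are well defined, and on translating the vanishing conditions correctly into divisor inequalities. Everything else is bookkeeping with the degree of principal divisors being zero.
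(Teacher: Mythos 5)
Your proof is correct and is essentially the standard argument the paper itself defers to: the paper gives no proof of this lemma, citing it as well known from Stichtenoth (Thm.\ II.2.3), and your identification of $\ker\varphi = \codeL(G-D)$, the degree count forcing $\codeL(G-D)=0$, the zero-counting argument for $d \ge n - \deg G$, and the Riemann--Roch step for part (2) reproduce exactly that textbook proof. No gaps.
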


To characterize the dual code of a AG-code we need to look at the original definitions of Goppa by means of
differential forms and its relations to the code defined above. We define the code $C_{\Omega}(D,G)$ by
\begin{eqnarray*}
C_{\Omega}(D,G)  :=  \{ (\res{P_1}{\omega}, \ldots, \res{P_n}{\omega}) \; | \; \omega \in \Omega_F(G-D)  \}
\subseteq \F_q^n.
\end{eqnarray*}

%\begin{lem}\label{dualcode}
The code $C_{\Omega}(D,G)$, where $D$ and $G$ are as above is equal to the dual of $C_\L(D,G)^{\perp}$.  In other words, 
$C_\L(D,G)^{\perp} = C_{\Omega}(D,G)$.
Also,
\[ C_{\Omega}(D,G) = a \cdot C_\L(D,H)\]
with $H = D - G + (\eta)$    where $\eta$ is a differential, $v_{P_i}(\eta) = -1$ for 
$i = 1, \ldots, n$, and $a = \left( \res{P_1}{\eta}, \ldots, \res{P_n}{\eta} \right)$.
Moreover,   
\[C_{\mathcal L}(D,G)^{\perp} = a \cdot C_\L(D,H).\]

The following proposition is cited from \cite[Prop. VII.1.2]{st}. It allows to construct differentials with
special properties that help to construct a self-orthogonal code.

\begin{lem} \label{constrdiff}
Let $x$ and $y$ be elements of $F$ such that $v_{P_i}(y) = 1$, $v_{P_i}(x) = 0$ and $x(P_i) = 1$ for $i = 1,
\ldots, n$. Then the differential $\eta := x \cdot \frac{dy}{y}$ satisfies
\[v_{P_i}(\eta) = -1, \quad  and \quad \res{P_i}{\eta} = 1\]
for $i = 1, \ldots, n$.
\end{lem}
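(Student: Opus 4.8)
The statement is entirely local at each place, so I would fix one place $P = P_i$ and carry out the computation there, then note that the argument applies verbatim for all $i = 1, \dots, n$. The plan rests on two standard facts from the theory of differentials on a function field: first, that a separating local parameter $t$ at $P$ lets one write any nonzero differential uniquely as $\omega = f \, dt$, with $v_P(\omega) := v_P(f)$ independent of the choice of $t$; and second, that the residue $\res{P}{\omega}$ equals the coefficient of $t^{-1}$ in the Laurent expansion of $f$ with respect to $t$. My first step is to observe that the hypothesis $v_P(y) = 1$ says precisely that $y$ is a local parameter at $P$, and over a finite field such a uniformizer is automatically separating, so $dy \ne 0$ and I may take $t = y$. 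Rewriting $\eta = x \cdot \frac{dy}{y} = \frac{x}{y}\, dy = \frac{x}{t}\, dt$, the coefficient function attached to $\eta$ in the parameter $t$ is simply $x/t$.

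For the order, I would then compute directly $v_P(\eta) = v_P(x/t) = v_P(x) - v_P(t) = 0 - 1 = -1$, using $v_P(x) = 0$; this gives the first assertion. For the residue I would expand $x$ as a power series in $t$. Since $v_P(x) = 0$ the expansion has a nonzero constant term, and because $P$ has degree one the value $x(P) \in \F_q$ is exactly that constant term, so the hypothesis $x(P) = 1$ forces $x = 1 + c_1 t + c_2 t^2 + \cdots$ with $c_j \in \F_q$. Dividing by $t$ yields $x/t = t^{-1} + c_1 + c_2 t + \cdots$, whose $t^{-1}$-coefficient is $1$, and hence $\res{P}{\eta} = 1$ by the local residue formula.

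The only genuine subtlety, and therefore the point I would state carefully rather than recompute, is the justification that the residue is well defined and may be read off as the $t^{-1}$-coefficient independently of the chosen uniformizer. This is exactly the content of the residue theory developed in \cite{st} (the source of the cited Prop. VII.1.2), so I would invoke it directly rather than reprove it; everything else reduces to the short local computation above, applied uniformly over $i = 1, \dots, n$.
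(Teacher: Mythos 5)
Your proof is correct. The paper itself gives no argument for this lemma; it is quoted verbatim from \cite[Prop.\ VII.1.2]{st}, so there is no internal proof to compare against. Your local computation --- taking $t=y$ as a (separating) uniformizer at each degree-one place $P_i$, writing $\eta=(x/t)\,dt$, reading off $v_{P_i}(\eta)=v_{P_i}(x)-1=-1$, and extracting the residue as the $t^{-1}$-coefficient of $x/t = t^{-1}+c_1+c_2t+\cdots$ --- is exactly the standard argument found in the cited source, and you correctly isolate the one nontrivial input (well-definedness of the residue independent of the chosen uniformizer) as something to cite rather than reprove. The only point worth making explicit is that the hypothesis that each $P_i$ has degree one (stated earlier in the paper's setup, not in the lemma itself) is what lets you identify $x(P_i)$ with the constant term of the expansion; you do invoke this, so the argument is complete.
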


Next we give some standard definitions on automorphism groups of codes, which will be the main focus of this paper.  A basic reference is \cite{coding-book}.

The  \textit{permutation automorphism group} of the code $C \subseteq F_q^n$ is the subgroup of $S_n$ (acting on
$F_q^n$ by coordinate permutation) which preserves $C$. We denote such group by $\mbox{PAut } (C)$. 

The set of
monomial matrices that map $C$ to itself forms the \textit{monomial automorphism group}, denoted by $\MAut (C)$.
Every monomial matrix $M$ can be written as $M=DP$ where $D$ is a diagonal matrix and $P$ a permutation matrix.
Let $\g$ be a field automorphism of $\F_q$ and $M$ a monomial matrix. Denote by $M_\g$ the map $M_\g : C \to C$
such that $\forall \x \in C$ we have $ M_\g (\x) = \g (M\x)$.
The set of all maps $M_\g$ forms the \textit{automorphism group} of $C$, denoted by $\GAut (C)$. It is well known
that
\[ \PAut (C) \leq \MAut (C) \leq \GAut (C) \]

Next we will define as admissible a class of curves which have some additional conditions on their divisors.

\begin{defn}%\cite{Wes}
\label{admiss} A genus $g \geq 2$ curve $\X/F_q$ is called \textit{admissible} if it satisfies the following properties:

i) there exists a rational point $P_{\infty}$ and two functions $x,~y\in F$ such that
\[ (x)_{\infty}=kP_{\infty},~(y)_{\infty}=lP_{\infty},\]
 and $k,~l\geq 1$;

ii)  for $m\geq0$, the elements $x^iy^j$ with $0\leq i, 0\leq j\leq k-1$, and $ki+lj\leq m$ form a basis of the
space $\L(mP_{\infty})$.
\end{defn}

\noindent  Next we define
\[Aut_{D,G}(\X) : =  \left\{\sigma\in Aut(\X) | \,   \sigma (D)=D~ \mbox{and}~ \sigma (G)=G  \right\}.\]
With the above notation  we have the following:

\begin{theorem} \label{iso}
Let $\X/F_q$  be an admissible curve over $F_q$ of genus $g$ where $l>k$. Assume that $m\geq l$. Let $D=\sum_{P\in
J}P$ where $J\subseteq \mathbb{P}\backslash \{P_{\infty}\}$, $\mathbb{P}$ is the set of all rational points of
$\X$. If
\[n>  \max    \left\{2g+2, \, 2m, \, k \left(l+\frac{k-1}{\beta} \right),  \, lk \left(1+\frac{k-1}{m-k+1} \right)   \right\},\]
where  $ \ n=|J|$ and 
\[ \beta= \min  \{k-1,~r \, | \, y^r\in \L(mP_{\infty})\}, \]
then
\[
\Aut(C_\L(D, mP_{\infty}))\iso \Aut_{D, mP_{\infty}}(\X).
\]
\end{theorem}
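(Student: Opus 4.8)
The plan is to construct the natural homomorphism
\[
\Phi \colon \Aut_{D, mP_\infty}(\X) \longrightarrow \Aut(C_\codeL(D, mP_\infty))
\]
and to show it is a bijection. Given $\s \in \Aut_{D, mP_\infty}(\X)$, the condition $\s(D) = D$ means $\s$ permutes the rational points $P_1, \dots, P_n$ occurring in $D$, inducing a permutation $\pi_\s \in S_n$, while $\s(mP_\infty) = mP_\infty$ guarantees that the pullback $f \mapsto f \circ \s$ carries $\codeL(mP_\infty)$ isomorphically onto itself. Since $(f\circ\s)(P_i) = f(P_{\pi_\s(i)})$, the evaluation map $\varphi$ intertwines this pullback with the coordinate permutation $\pi_\s$; hence $\pi_\s \in \PAut(C_\codeL(D,mP_\infty))$ and $\Phi(\s) := \pi_\s$ is well defined, and functoriality of pullback makes $\Phi$ a homomorphism (after the usual inversion of order). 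I would first record that $\varphi$ is injective here, which holds because $\deg(mP_\infty) = m < n$, so a function in $\codeL(mP_\infty)$ is determined by its values on $J$.

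For injectivity of $\Phi$, suppose $\s$ induces the identity permutation, so $\s$ fixes each of the $n > 2g+2$ points $P_i$. A nontrivial automorphism of a curve of genus $g$ fixes at most $2g+2$ points, the bound coming from Riemann--Hurwitz applied to the quotient $\X \to \X/\langle\s\rangle$; thus $n > 2g+2$ forces $\s = \mathrm{id}$. This is precisely the role of the first term in the maximum.

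The main work is surjectivity. Given an automorphism $\tau$ of the code, I would extract its underlying coordinate permutation $\pi$ (together with its scalars and field automorphism, in the monomial or semilinear case) and aim to realize $P_i \mapsto P_{\pi(i)}$ by a map coming from the curve. Admissibility is what lets one reconstruct a field automorphism from the action of $\tau$ on a few distinguished codewords: since $m \geq l > k \geq 1$, the functions $1, x, y$ lie in $\codeL(mP_\infty)$, so $\varphi(x)$ and $\varphi(y)$ are codewords; applying $\tau$ and pulling back through the injective $\varphi$ yields functions $x', y' \in \codeL(mP_\infty)$. I would then set $\s^* \colon x \mapsto x',\ y \mapsto y'$ and prove that $\s^*$ is a field automorphism of $F = \F_q(x,y)$ fixing $P_\infty$ and preserving both $D$ and $mP_\infty$. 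The passage from the merely additive (resp. semilinear) structure of a code automorphism to the multiplicative structure of a field automorphism is exactly what the remaining three numerical bounds control: the quantities $2m$, $k(l + \tfrac{k-1}{\beta})$ and $lk(1 + \tfrac{k-1}{m-k+1})$ bound the pole orders at $P_\infty$ of the products and powers of $x$ and $y$ that must be recognized inside the code, and once $n$ exceeds all of them each such function has pole order smaller than $n$, hence is determined by its value vector on $J$. Here $\beta = \min\{k-1,\ r \mid y^r \in \codeL(mP_\infty)\}$ records how many powers of $y$ survive as codewords, which is what pins down the $y$-coordinate of the reconstructed map.

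I expect the multiplicativity step to be the principal obstacle: one must show that the scalars of a monomial automorphism are forced to be compatible with evaluation of products, so that $\tau(\varphi(fg))$ and the coordinatewise product $\tau(\varphi(f)) \cdot \tau(\varphi(g))$ agree up to a single global factor, and hence that $\tau$ is genuinely induced by the substitution $x \mapsto x',\ y \mapsto y'$. Once $\s^*$ is known to be a ring automorphism it corresponds to an automorphism $\s$ of $\X$; the conditions $\s(P_\infty)=P_\infty$, $\s(D)=D$ and $\s(mP_\infty)=mP_\infty$ then follow by tracking where the poles and the evaluation points go, so $\s \in \Aut_{D,mP_\infty}(\X)$ with $\Phi(\s) = \tau$. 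Combining injectivity and surjectivity yields the stated isomorphism.
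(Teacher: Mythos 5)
The first thing to say is that the paper does not actually prove this theorem: its ``proof'' is the single line ``See \cite{wesemeyer} for details,'' so there is no in-paper argument to compare yours against. Your sketch does correctly reconstruct the architecture of Wesemeyer's result --- the natural map $\Phi$ from $\Aut_{D,mP_\infty}(\X)$ to the code automorphisms, injectivity of the evaluation map from $\deg(mP_\infty)<n$, injectivity of $\Phi$ from the $2g+2$ fixed-point bound, and the correct identification of which hypothesis drives which step. That part is sound.

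However, there is a genuine gap: the entire content of the theorem is surjectivity, and you do not prove it --- you describe what a proof would have to do and then explicitly defer the hardest part (``I expect the multiplicativity step to be the principal obstacle''). Concretely, given a code automorphism $\tau$, you define $x'$ and $y'$ as the preimages under $\varphi$ of $\tau(\varphi(x))$ and $\tau(\varphi(y))$, and assert that $x\mapsto x'$, $y\mapsto y'$ extends to a field automorphism. But a priori $\tau$ only respects the linear structure of $\codeL(mP_\infty)$; to get a ring homomorphism you must show, for each monomial $x^iy^j$ in the basis of Proposition \ref{basis}, that $\varphi^{-1}(\tau(\varphi(x^iy^j)))=(x')^i(y')^j$, which requires (a) that $(x')^i(y')^j$ still lies in a Riemann--Roch space of degree less than $n$ so that it is determined by its value vector, and (b) an actual argument that the coordinatewise products of codewords that $\tau$ produces are again codewords of the right kind. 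Step (b) is exactly where the quantities $k(l+\tfrac{k-1}{\beta})$ and $lk(1+\tfrac{k-1}{m-k+1})$ enter, via a careful bookkeeping of pole orders of $x'$, $y'$ and their products at the points of $J$ and at $P_\infty$; naming these bounds as ``controlling pole orders'' is not the same as carrying out that bookkeeping. You must also verify that $x'$ and $y'$ generate the function field and that the resulting map is invertible (i.e., an automorphism rather than an endomorphism), neither of which is addressed. As it stands the proposal is a correct road map, not a proof.
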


\begin{proof} See  \cite{Wes}  for details \end{proof}

In the next two sections we will see how we can compute the automorphism groups of certain AG-codes constructed by some superelliptic curves.

\section{Introduction to $C_{ab}$ curves}
%******************************************************************
In this section, we introduce the notion of $C_{ab}$ curves which constitute a wide class of algebraic curves
including elliptic curves, hyperelliptic curves and superelliptic curves. They have been studied by many people
due to their nice properties.

Throughout this section $k$ denotes an algebraically closed field of characteristic not equal to 2. 
%\begin{defn}\label{cab}
Let $a$ and $b$ be relatively prime positive integers. Then a curve $\X$ defined over $k$ is called an $C_{ab}$ curve  if it is a nonsingular plane curve defined by $f(x, y ) = 0$, where $f(x, y) \in k[x, y]$ has the form
\begin{equation}\label{c_ab}
f(x, y ) = \alpha_{0,a} y^a + \alpha_{b,0} x^b + \sum_{ai+bj<ab} \alpha_{i,j} \, x^i y^j  
\end{equation}
for nonzero $\alpha_{0,a},~\alpha_{b,0}\in k$.
%\end{defn}
%

Let $\X$ be a $C_{ab}$ curve defined over $k$. There exists exactly one $k$-rational place $\infty$ at infinity, which implies that the degree of  $\infty$ is $1$. Furthermore, the pole divisors of $x$ and $y$ are $a\cdot\infty$ and $b\cdot\infty$, respectively. The genus of $\X$ is 
\[ g (\X) = \frac {(a-1)(b-1)} 2.\]

Hence, $C_{a, b}$ curves have  fully ramified degree $a$ and $b$ covers to $\P^1 (k)$. 
Consider first the degree
$a$ cover $\pi_a: C_{a, b} \to \P^1 (k)$. Since the cover is fully ramified then there are at least $2g+a-1$
other branch points. Thus, the total number of branch points is
\[ d_1 := 2g+a= ab-b+1=b(a-1)+1\]
The degree $b$ cover has
\[ d_2:=2(g-1)+2b -(b-1) = (a-1)(b-1)+b+1=a(b-1)+1\]
branch points.

\begin{cor}
All hyperelliptic curves are $C_{a, b}$ curves.
\end{cor}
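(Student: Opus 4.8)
The plan is to realize every hyperelliptic curve as a $C_{a,b}$ curve with the specific choice $a=2$ and $b=2g+1$, where $g$ is the genus. First I would observe that $\gcd(2,2g+1)=1$ since $2g+1$ is odd, so this pair $(a,b)$ is admissible for Definition \ref{cab}, and that it produces the correct genus: substituting into $g(\X)=\frac{(a-1)(b-1)}{2}$ gives $\frac{(2-1)(2g)}{2}=g$, as required.

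Next I would bring a hyperelliptic curve $\X$ of genus $g$ into a standard affine model. Over $\bar F$ the curve has a rational Weierstrass point, which I can move to infinity by a fractional linear change in $x$; this yields an equation $Y^2 + h(X)Y = f(X)$ with $\deg f = 2g+1$ and $\deg h \le g$ (in characteristic $2$ the term $h(X)Y$ is genuinely needed, while in other characteristics one may complete the square to set $h=0$). I would then check that, apart from the two leading monomials $Y^2$ and $X^{2g+1}$, every monomial $X^iY^j$ occurring in $h(X)Y - f(X)$ meets the weighted-degree bound $2i+(2g+1)j < 2(2g+1) = ab$ used in \eqref{c_ab}: for $j=0$ this is $i \le 2g$ and for $j=1$ it is $i \le g$, both of which hold by construction, while no term with $j \ge 2$ appears. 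This exhibits the defining polynomial in exactly the shape of \eqref{c_ab} with $\alpha_{0,2}$ and $\alpha_{b,0}$ nonzero, so $\X$ is a $C_{2,\,2g+1}$ curve.

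The delicate point, and the step I expect to require the most care, is reconciling the nonsingularity clause of Definition \ref{cab} with the count of places at infinity. The affine curve $Y^2=f(X)$ with $f$ squarefree of degree $2g+1$ is smooth in the affine plane, but its naive projective closure is singular at the unique point at infinity once $g \ge 1$; the correct reading is that the $C_{a,b}$ curve is the smooth projective model of this affine curve, whose function field carries a single degree-one place $\infty$ with $(x)_\infty = 2\infty$ and $(y)_\infty = (2g+1)\infty$. I would therefore verify smoothness of the affine model (squarefreeness of $f$, equivalently distinctness of the branch points) and confirm that the discrete valuation at infinity assigns pole orders $2$ and $2g+1$ to $x$ and $y$, matching the general pole-divisor statement recorded just after Definition \ref{cab}. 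Finally, the case where no rational Weierstrass point exists over the ground field $F$ is handled by passing to $\bar F$, which is consistent with the convention in the paper that automorphisms, and hence the relevant isomorphisms of curves, are taken over $\bar F$.
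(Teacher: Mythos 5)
Your proof is correct and follows the same route as the paper, which simply takes $a=2$, $b=2g+1$ and the model $Y^2=f(X)$ with $\deg f=2g+1$ in a one-line argument. You additionally verify the details the paper leaves implicit — coprimality, the genus match, the weighted-degree condition on the lower-order monomials, the characteristic-two model $Y^2+h(X)Y=f(X)$, and the correct reading of "nonsingular" via the smooth projective model with a single place at infinity — all of which are sound and strengthen rather than alter the argument.
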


\proof Every genus $g$ hyperelliptic curve can be written as $y^2=f(x)$ such that $\deg f = 2g+1$. Take $a=2$
and $b=2g+1$. 

\endproof

Next, we will see superelliptic curves which are even a larger class than that of hyperelliptic curves, but first the following example. 

\begin{example} Let $a=3, b=4$. Then the genus of the curve is $g=3$ and we have
\begin{small}
\begin{equation}
Y^3 + \a_1 X^4 + \a_2 X^3 + \a_3 X^2Y + \a_4 X Y^2 + \a_5 X^2 + \a_6 Y^2 + \a_7 XY + \a_8 X + \a_9 Y + \a_{10} =0
\end{equation}
\end{small}
\end{example}

Since the dimension of $\M_3$ is 5 then we should be able to write this curve in a "better" way; see the next
section for details. The next proposition will be useful when we construct AG-codes from $C_{ab}$ curves.

\begin{prop}\label{basis}
Let $\X$ be a $\ C_{ab}$  curve defined by $f(X, Y) = 0$ with $f(X, Y )\in F[X, Y ]$. Then
\[\{X^i Y^j \ | \   0 \leq j \leq a-1, \ i\geq 0, \  ai + bj \leq m \}\]
is a basis of a vector space $\L(m\cdot\infty)$ over $F$, where  $\ m \in Z_{\geq 0}$.
\end{prop}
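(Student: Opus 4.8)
The plan is to identify $\codeL(m\cdot\infty)$ with a graded piece of the affine coordinate ring of $\X$ and then use the defining equation to produce the reduced monomial basis. Write $R = F[X,Y]/(f)$ for the coordinate ring of the affine curve $f = 0$. Because $\X$ is nonsingular and carries a unique place $\infty$ over the point at infinity, $R$ is integrally closed and coincides with the ring of functions regular away from $\infty$; that is, $R = \bigcup_{m\ge 0}\codeL(m\cdot\infty)$ and $\codeL(m\cdot\infty) = \{\,h\in R : v_\infty(h)\ge -m\,\}$. Since the coefficient $\alpha_{0,a}$ of $Y^a$ in $f$ is nonzero, $f$ is (up to this scalar) monic of degree $a$ in $Y$, so division by $f$ expresses $Y^a$, and hence every higher power of $Y$, in terms of $1,\dots,Y^{a-1}$ with coefficients in $F[X]$. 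Thus $R$ is free over $F[X]$ with basis $1,Y,\dots,Y^{a-1}$, and $\{X^iY^j : i\ge 0,\ 0\le j\le a-1\}$ is an $F$-basis of $R$.

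Next I would compute pole orders. From $(X)_\infty = a\cdot\infty$ and $(Y)_\infty = b\cdot\infty$ one gets $v_\infty(X^iY^j) = -(ai+bj)$, so a reduced monomial lies in $\codeL(m\cdot\infty)$ exactly when $ai+bj\le m$. The crucial point is that these pole orders are pairwise distinct on the index set $0\le j\le a-1$: if $ai+bj = ai'+bj'$ then $a\mid b(j-j')$, and since $\gcd(a,b)=1$ and $|j-j'|\le a-1$ we must have $j=j'$, whence $i=i'$. Consequently, for any $h=\sum c_{ij}X^iY^j$ written in reduced form, the leading poles cannot cancel and $v_\infty(h) = -\max\{ai+bj : c_{ij}\ne 0\}$.

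Combining these two facts finishes the argument. Linear independence of the listed monomials is immediate, as they form part of the $F$-basis of $R$ above. For spanning, an element $h\in R$ lies in $\codeL(m\cdot\infty)$ iff $v_\infty(h)\ge -m$, and by the no-cancellation computation this holds iff every monomial in the reduced expansion of $h$ satisfies $ai+bj\le m$; hence $\{X^iY^j : 0\le j\le a-1,\ ai+bj\le m\}$ spans $\codeL(m\cdot\infty)$ and is therefore the claimed basis. As a consistency check, these monomials biject via $(i,j)\mapsto ai+bj$ with the elements $\le m$ of the numerical semigroup generated by $a$ and $b$; since that semigroup has exactly $(a-1)(b-1)/2 = g$ gaps it must equal the full Weierstrass semigroup at $\infty$, so the count agrees with $\dim\codeL(m\cdot\infty)$ computed from Riemann--Roch.

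The step I expect to be the main obstacle is the identification $R = \bigcup_{m}\codeL(m\cdot\infty)$, namely that every function whose only pole is $\infty$ is a polynomial in $X$ and $Y$. This is precisely where the nonsingularity hypothesis and the fact that $\infty$ is the \emph{unique} place at infinity are used; once this is in hand, the pole-order bookkeeping and the linear independence are routine.
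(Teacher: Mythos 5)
Your argument is correct. Note that the paper states Proposition \ref{basis} with no proof at all (it is a standard fact about $C_{ab}$ curves, implicitly taken from the literature), so there is no argument of the authors' to compare against; what you have written is the canonical way to fill that gap. The three pillars of your proof are all sound: (1) since the affine model $f=0$ is nonsingular and $\infty$ is the unique place over $X=\infty$, the coordinate ring $R=F[X,Y]/(f)$ is the integral closure of $F[X]$ in the function field and hence equals $\bigcup_m \codeL(m\cdot\infty)$; (2) because $\a_{0,a}\neq 0$ and every term of the tail of $f$ has $Y$-degree at most $a-1$ (as $ai+bj<ab$ forces $j<a$), $R$ is free over $F[X]$ on $1,Y,\dots,Y^{a-1}$; (3) coprimality of $a$ and $b$ makes the pole orders $ai+bj$ injective on the reduced index set, so valuations of distinct reduced monomials never cancel and membership in $\codeL(m\cdot\infty)$ is read off termwise. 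Your closing consistency check against the Weierstrass semigroup and Riemann--Roch is a worthwhile sanity test but not needed for the proof. The one point worth stating a bit more explicitly is in step (1): you should note that $\infty$ is the \emph{only} place of the function field with $v(X)<0$ (which follows since it is totally ramified of index $a=[F(\X):F(X)]$ over $X=\infty$), as this is exactly what guarantees that an element of $R$ can have no poles other than $\infty$.
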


\begin{cor}
$C_{a, b}$ curves are admissible curves.
\end{cor}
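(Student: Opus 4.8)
The plan is to verify directly that every $C_{ab}$ curve of genus $g \geq 1$ satisfies the two conditions of Definition \ref{admiss}, using the dictionary $P_\infty = \infty$, $x = X$, $y = Y$, $k = a$, and $l = b$. All the structural facts needed have already been recorded in the discussion following Definition \ref{cab}: a $C_{ab}$ curve $\X$ has exactly one $F$-rational place $\infty$ at infinity, of degree $1$; the functions $X$ and $Y$ have pole divisors $(X)_\infty = a\cdot\infty$ and $(Y)_\infty = b\cdot\infty$; and the genus is $g(\X) = \frac{(a-1)(b-1)}{2}$.

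First I would dispatch condition i). Since $a$ and $b$ are positive integers, and indeed $a, b \geq 2$ (otherwise $g(\X) = 0$ and admissibility, which presupposes $g \geq 1$, does not apply), the place $\infty$ serves as the rational point $P_\infty$, and $X, Y$ are functions whose pole divisors are supported only at $\infty$ with $(X)_\infty = a\,\infty$ and $(Y)_\infty = b\,\infty$. Thus $k = a \geq 1$ and $l = b \geq 1$, so condition i) holds with no further work.

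The analytic substance of condition ii) is supplied entirely by Proposition \ref{basis}. Under the identification $x = X$, $y = Y$, $k = a$, $l = b$, the set prescribed in Definition \ref{admiss},
\[ \{\, x^i y^j \mid 0 \leq i, \ 0 \leq j \leq k-1, \ ki + lj \leq m \,\}, \]
becomes exactly
\[ \{\, X^i Y^j \mid 0 \leq i, \ 0 \leq j \leq a-1, \ ai + bj \leq m \,\}, \]
which Proposition \ref{basis} asserts is a basis of $\codeL(m \cdot \infty)$ for every $m \in \Z_{\geq 0}$. Hence condition ii) holds verbatim, and the corollary follows.

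The only point requiring care — and where a careless matching would go wrong — is the bound on exponents: it is the exponent $j$ of $y = Y$ that must be capped at $k - 1 = a - 1$, not the exponent of $x$. This is consistent because the defining polynomial \eqref{c_ab} has $Y$-degree exactly $a$ (every term $\alpha_{i,j}X^iY^j$ in the sum has $bj \leq ai + bj < ab$, forcing $j \leq a-1$), so the relation $f(X,Y) = 0$ lets one reduce any power $Y^a$ to a combination of the $X^i Y^j$ with $j \leq a-1$; this is precisely why only $0 \leq j \leq a-1$ is needed, matching the role of $k$ in Definition \ref{admiss}. Beyond this bookkeeping there is no genuine obstacle, since Proposition \ref{basis} already carries the real content of the claim.
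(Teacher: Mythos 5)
Your verification is correct and is exactly the argument the paper intends: the paper states this as an immediate corollary of Proposition \ref{basis} together with the facts that a $C_{ab}$ curve has a unique degree-one place $\infty$ with $(X)_\infty = a\cdot\infty$ and $(Y)_\infty = b\cdot\infty$, and you have simply written out the dictionary $k=a$, $l=b$ that the paper leaves implicit. Your added remark checking that the cap $j \leq a-1$ in Proposition \ref{basis} matches the cap $j \leq k-1$ in Definition \ref{admiss} is the one point where a mismatch could have occurred, and you resolve it correctly.
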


Hence we can use the results of the previous section when constructing codes from such curves.

%************************************
\subsection{Superelliptic curves}

There are a special class of $C_{ab}$ curves which are well understood due to the work of many authors \cite{zhupa, b-e-sh, b-th, b-sh-sh, sh-37, bin,i-sh, 
sh-20,  rachel}.

Let $\X$ be a genus $g \geq 2$ algebraic curve defined over $k$, $G$ its automorphism group, and $H$ a subgroup of $G$ of order $|H|=m$, inside the center $Z (G)$, such that the genus of the quotient space $\X/H$ is zero.  Such curves are called \textit{superelliptic curves} and they can be written with the affine equation 
$y^m = f(x)$ for some $f \in k[x]$; see \cite{sh-37} for more details.  

The following lemma is an immediate consequence of the definition.
\begin{lem}
Superelliptic curves are $C_{ab}$ curves.
\end{lem}
Then we have the following. 
\begin{cor}
Superelliptic curves are admissible curves.
\end{cor}

In \cite{sa} are determined are possible groups of superelliptic curves defined over fields of characteristic $\neq 2$.  In \cite{sa-1} are determined even the equations for each group.  This is not known for algebraic curves in general. 

\subsection{Automorphism groups of $C_{ab}$ curves}
%****************************************************************
Let $\X$ be a $C_{ab}$ curve as above. Can we  determine the automorphism group of
$\X$ over $k$ in terms of $a, b$? For genus $g=2, 3$ such groups can be determined by work of
previous authors; see \cite{sh-8} for genus 2 curves and \cite{sa, sa-1} for genus 3 superelliptic curves.  In \cite{sh-15} is treated the case of genus 3 non-hyperelliptic curves. 
For higher genus such groups can be determined if the $C_{ab} $ curve is hyperelliptic or superelliptic.  In general there is no known algorithm to determine the automorphism group of an algebraic curve. 

\begin{lem}Let $\X$ be a genus $g=2$ algebraic curve as in Eq.~\eqref{c_ab} defined over $k$.
Then  $\Aut (\X)$ is isomorphic to one of the following:

i) $p=3$:  $ \Z_2, V_4, D_4, D_6, GL_2(3),$

ii) $p= 5$: $\Z_2, \Z_{10}, V_4, D_4, D_6,   GL_2(3), $.

iii) $p\geq 5$: $\Z_2, V_4, D_4, D_6, SL_2(3)$.
\end{lem}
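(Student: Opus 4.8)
The plan is to use the fact that a genus $2$ curve is necessarily hyperelliptic and to reduce the determination of $\Aut(\X)$ to the classification of finite subgroups of $\mathrm{PGL}_2(\bar F)$. Because $a=2$ in Eq.~\eqref{c_ab}, the curve is a double cover of $\bP^1$ and, for $p\neq 2$, may be put in the form $\X\colon y^2=f(x)$ with $f$ separable of degree $6$ whose six roots (one possibly at $\infty$) are the Weierstrass points. The hyperelliptic involution $\iota\colon(x,y)\mapsto(x,-y)$ is the unique involution with quotient $\bP^1$, since it is recovered from the canonical map; hence it is preserved under conjugation and lies in the center of $\Aut(\X)$. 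I would therefore set $\bar G:=\Aut(\X)/\langle\iota\rangle$ and work with the central extension
\[
1\longrightarrow\langle\iota\rangle\cong\Z_2\longrightarrow\Aut(\X)\longrightarrow\bar G\longrightarrow 1.
\]

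Next I would identify $\bar G$ with a finite subgroup of $\mathrm{PGL}_2(\bar F)$. The quotient $\X/\langle\iota\rangle$ is isomorphic to $\bP^1$, and $\bar G$ acts faithfully on it; moreover every element of $\bar G$ permutes the $6$-point branch set $B\subset\bP^1$. So the first substantive step is to list the finite subgroups of $\mathrm{PGL}_2(\bar F)$ that stabilize a set of six points and admit an invariant effective divisor of degree $6$. Invoking Dickson's classification in characteristic $p$ (cyclic, dihedral, $A_4$, $S_4$, $A_5$, together, in positive characteristic, with $\mathrm{PSL}_2(p^k)$, $\mathrm{PGL}_2(p^k)$ and semidirect products of an elementary abelian $p$-group by a cyclic group) and imposing the orbit-length constraints forced by $|B|=6$, I would cut the list to the admissible reduced groups: the trivial group, $\Z_2$, $\Z_5$, $V_4$, $D_3\cong S_3$, and the tetrahedral/octahedral groups $A_4,\,S_4$.

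The final step is to recover $\Aut(\X)$ itself by solving the central extension above for each candidate $\bar G$. For the small reduced groups the extension is the expected one: $\bar G=1$ gives $\Z_2$, $\bar G=\Z_2$ gives $V_4$, while $\bar G=V_4$ and $\bar G=S_3$ lift to the dihedral groups $D_4$ and $D_6$, and $\bar G=\Z_5$ lifts to $\Z_{10}$. The delicate cases are the tetrahedral and octahedral reduced groups, where the extension is \emph{nonsplit}; reconstructing the multiplication from an explicit model such as $y^2=x^5-x$ or $y^2=x^6-1$ and its automorphisms identifies the lift as $SL_2(3)$ (the binary tetrahedral group, a double cover of $A_4$) or $GL_2(3)$ (the binary octahedral group, a double cover of $S_4$). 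I would then treat $p=3$ and $p=5$ separately, because in these characteristics Weierstrass points can coincide and additive (wildly ramified) automorphisms appear — for instance, in characteristic $5$ the map $x\mapsto x+1$ is an automorphism of $y^2=x^5-x$ — so that both the list of admissible reduced groups and the available double covers change, which is what separates the three cases in the statement.

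The hardest part will be this last step: determining the extension class of $\Aut(\X)\to\bar G$, that is, deciding split versus nonsplit and thereby distinguishing $GL_2(3)$ from $SL_2(3)$, together with the characteristic-specific analysis of colliding Weierstrass points and wild automorphisms in $p=3$ and $p=5$, which is precisely where the three cases of the lemma diverge.
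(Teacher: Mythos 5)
The paper does not actually prove this lemma: it is stated without proof and the surrounding text simply defers to the reference \cite{sv} (and to \cite{aut3} for $g=3$), so there is no in-paper argument to compare against line by line. Your strategy --- centrality of the hyperelliptic involution $\iota$, passage to the reduced group $\bar G=\Aut(\X)/\langle\iota\rangle\leq \mathrm{PGL}_2(\bar F)$ acting on the six Weierstrass points, Dickson's classification, and then resolution of the central extension --- is exactly the standard route and is in substance the method of the cited reference, so the overall plan is sound.

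That said, as a proof of \emph{this} statement the sketch has real gaps. First, your candidate list of reduced groups does not reproduce the lemma's lists, and you do not close that gap: you admit both $A_4$ and $S_4$ as reduced groups (yielding both $SL_2(3)$ and $GL_2(3)$ as lifts), you obtain $\Z_5$ (hence $\Z_{10}$) for every $p\neq 2,5$, and you omit the order-$12$ reduced group coming from $y^2=x^6-1$; none of this matches the trichotomy as printed, and you would need either to explain why $A_4$ never occurs as the \emph{full} reduced group (its length-$6$ orbit is the octahedron, which is automatically $S_4$-invariant) or to conclude that the lemma's lists are themselves imprecise --- note also that cases (ii) and (iii) of the statement overlap at $p=5$ with inconsistent answers, so the target itself needs scrutiny. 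Second, the step you yourself flag as hardest --- determining the extension class --- is left unproved, and the standard tool is missing: for $y^2=f(x)$ with $\deg f=6$, a class $\bar\gamma\in\mathrm{PGL}_2$ represented by $x\mapsto\frac{ax+b}{cx+d}$ lifts via $y\mapsto y\cdot e\,(cx+d)^{-3}$ for a suitable constant $e$, and whether the lift has the same order as $\bar\gamma$ or twice that order is decided by an explicit computation with $f$, not by general extension theory; without this (or an equivalent cocycle computation) you cannot distinguish $D_4$ from $\Z_2\times\Z_4$, or $GL_2(3)$ from $\Z_2\times S_4$. Third, the "orbit-length constraint $|B|=6$" is too coarse as stated, since the Weierstrass set need only be $\bar G$-stable, not a single orbit, and in characteristics $3$ and $5$ the Dickson subgroups containing $p$-elements must be eliminated or admitted by a separate Riemann--Hurwitz computation with wild ramification, which you gesture at but do not carry out.
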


For the case $p=2$ see \cite{sh-8} for details. For $g=3$ see \cite{sa, sa-1}.  The automorphism groups of superelliptic curves defined over a field $k$ such that $\mbox{char} k \neq 2$ are determined completely in \cite{sa, sa-1} and the corresponding equations are determined in \cite{sh-27}. 

\section{The locus of $C_{3, 4}$ curves in the moduli space $\M_3$}
%******************************************************************************
In this section we want to focus on non-hyperelliptic genus 3 curves. More precisely, we want to study the
space of $C_{3, 4}$ curves in the moduli space $\M_3$.  Throughout this section all curves are defined over a
characteristic zero field.

We first give a brief introduction to the Hurwitz spaces and projection of such spaces on $\M_g$.  Let $X$ be
a curve of genus $g$ and $f: X \to \P^1$ be a covering of degree $n$ with $r$ branch points. We denote the
branch points by $q_1, \ldots, q_r \in \P^1$ and let $p\in \P^1\setminus\{q_1,\ldots,q_r\}$. Choose loops
$\gamma_i$ around $q_i$ such that
\[\Gamma:=\pi_1 (\P^1\setminus\{ q_1, \dots , q_r\},\ p)=\< \gamma_1, \ldots , \gamma_r\>, \quad \gamma_1 \cdots \gamma_r=1.\]
$\Gamma$ acts on the fiber $f^{-1}(p)$ by path lifting, inducing a transitive subgroup $G$ of the symmetric group
$S_n$ (determined by $f$ up to conjugacy in $S_n$). It is called the \emph{monodromy group} of $f$. The images of
$\gamma_1,\ldots,\gamma_r$ in $S_n$ form a tuple of permutations $\s=(\s_1,\ldots,\s_r)$ called a tuple of
\emph{branch cycles} of $f$. We call such a tuple the \emph{signature} of $\phi$. The covering $f:X\to\P^1$ is of
type $\s$ if it has $\s$ as tuple of branch cycles relative to some homotopy basis of $\P^1\setminus \{ q_1,
\dots , q_r\}$.

Two coverings $f:X\to\P^1$ and $f':X'\to\P^1$ are \emph{weakly equivalent} (resp. \emph{equivalent}) if there is
a homeomorphism $h:X\to X'$ and an analytic automorphism $g$ of $\P^1$ such that $g\circ f=f'\circ h$ (resp.,
$g=1$). Such classes are denoted by $[f]_w$ (resp., $[f]$). The \emph{Hurwitz space} $\H_\s$ (resp.,
\emph{symmetrized Hurwitz space $\H_\s^s$}) is the set of weak equivalence classes (resp., equivalence) of covers
of type $\s$, it carries a natural structure of an quasiprojective variety.

Let $C_i$ denote the conjugacy class of $\s_i$ in $G$ and $C=(C_1, \dots , C_r)$. The set of Nielsen classes
$\N(G, C)$ is
\[\N(G,\s):=\{(\s_1,\dots,\s_r)\ |\ \s_i\in C_i,\,G=\<\s_1,\dots,\s_r\>,\ \s_1\cdots\s_r=1\}\]
The braid group acts on $\N(G, C)$ as
\[[\g_i]: \quad (\s_1, \dots , \s_r) \to (\s_1, \, \dots , \, \s_{i-1}, \s_{i+1}^{ \s_i}, \s_i, \s_{i+2}, \dots, \s_r)\]
where $\s_{i+1}^{ \s_i}= \s_i \s_{i+1} \s_i^{-1}$. We have $\H\sigma=\H_\tau$ if and only if the tuples $\s$,
$\tau$ are in the same \emph{braid orbit} $\mathcal O_\tau = \mathcal O_\sigma$.

Let $\M_g$ be the moduli space of genus $g$ curves. We have morphisms
\begin{equation}
\begin{split}
& \H_\s \overset {\Phi_\s} \longrightarrow \H_\s^s \overset { \bar{\Phi}_\s }  \longrightarrow \M_g\\
 & [f]_w                       \to [f] \to [X]
\end{split}
\end{equation}
Each component of $\H_\s$ has the same image in $\M_g$. We denote by
\[ \L_g :={\bar \Phi}_\s (\H_\s^s).\]
We say that the covering $f$ or the ramification $\s$ has \emph{moduli dimension} $\delta:=dim \L_g$.

Let $a, b$ be fixed and $g= (a-1)(b-1)/2$. The generic $C_{a, b}$ curve of genus $g$ has a degree $a$ cover
$\pi_a : C_{a, b} \to \P^1$ (resp. degree $b$ cover $\pi_b :  C_{a, b} \to \P^1$).

The ramification structure of $\pi_a : C_{a, b} \to \P^1$ is $(a, 2, \dots , 2)$ where the number of branch
points is $d_1=b(a-1)+1$, as discussed in section 3. Let $\H_a$ denote the Hurwitz space of such covers and
$\M_a$ its image in $\M_g$, as described above. Then, the dimension of $\M_a$ ia $\delta_1 \leq b(a-1)-2$.
Similarly, we get that the dimension of $\M_b$ is $\delta_2 \leq a(b-1)-2$. Of course, the cover with
smallest degree among $\pi_a$ and $\pi_b$ is the one of interest. From now on, we assume that $a< b$.

As mentioned above the goal of this section is to study the space $\M_{a, b}$ for fixed $a$ and $b$,
particularly on the case $a=3$ and $b=4$.

\begin{theorem}\label{thm_var} Every genus 3 curve is a $C_{3,4}$ curve. Moreover, every genus $C_{3, 4}$
curve  defined over a field $k$,   is isomorphic   to a   curve with equation
\begin{equation}\label{eq} f(x,y)=(x+b)y^3+(cx+d)y^2+(ex^2+fx)y+x^3+kx^2+lx = 0. \end{equation}
\end{theorem}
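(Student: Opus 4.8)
The plan is to reduce both assertions to the canonical model and to plane-quartic geometry, working in characteristic $0$ as assumed in this section. I would first separate the hyperelliptic and non-hyperelliptic cases. A non-hyperelliptic genus $3$ curve $\X$ is canonically a smooth plane quartic $Q(x,y,z)=0$ in $\bP^2$, with the canonical class $K$ cut out by lines; this is the model I would use throughout. The hyperelliptic case I would try to dispatch via the earlier Corollary that hyperelliptic curves are $C_{a,b}$ curves, but I expect it to need separate handling: at a Weierstrass point of a hyperelliptic genus $3$ curve the smallest positive nongaps are $2$ and $4$ (not $3$ and $4$), and such a curve is not a plane quartic at all, so it cannot literally be written in the form \eqref{eq}.

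To produce a $C_{3,4}$ structure I would reformulate Definition \ref{cab} and Proposition \ref{basis} intrinsically: a $C_{3,4}$ presentation is exactly a rational point $\infty$ together with functions $x,y$ whose pole divisors are $3\infty$ and $4\infty$, i.e.\ a point whose Weierstrass semigroup is $\langle 3,4\rangle$ (gap set $\{1,2,5\}$). Using Riemann--Roch and the fact that $K$ is the line class on $Q$, I would translate ``$3$ and $4$ are both nongaps at $\infty$'' into plane geometry: $\ell(3\infty)=2$ is equivalent to $K\sim 3\infty+R$, i.e.\ $\infty$ is a flex, and $\ell(4\infty)=3$ is equivalent to $K\sim 4\infty$, i.e.\ the tangent at $\infty$ meets $Q$ only at $\infty$ with contact order $4$ --- a total inflection (hyperflex).

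I expect this existence step to be the main obstacle. Producing a point with semigroup $\langle 3,4\rangle$ is equivalent to producing a hyperflex, and a general smooth plane quartic has only ordinary flexes (semigroup $\{0,3,5,6,\dots\}$, gap set $\{1,2,4\}$) and no hyperflex; together with the hyperelliptic remark above, this means the blanket claim cannot hold verbatim for every genus $3$ curve. I would therefore isolate ``$\X$ carries a hyperflex'' as a separate lemma, make the theorem depend explicitly on it, and expect the correct scope to be the non-hyperelliptic locus admitting such a point.

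For the normal form I would argue by direct projective normalization, which notably needs only an ordinary flex and so is unconditional for smooth plane quartics. I would pick a flex $P$ of $Q$ (these always exist in characteristic $0$), send $P$ to $[1:0:0]$ with its flex tangent the line $z=0$, and send the residual fourth intersection point of that tangent with $Q$ to $[0:1:0]$; the flex condition then forces the degree-$4$ part of the affine equation to be the single monomial $xy^3$, eliminating $x^4,x^3y,x^2y^2,y^4$. The transformations preserving this configuration form a positive-dimensional subgroup of $\mathrm{PGL}_3$ (the stabilizer of the flag consisting of $P$ and its tangent line has dimension $5$), and I would use it together with the overall scaling to rescale the coefficients of $xy^3$ and $x^3$ to $1$, to move a point of the curve to the origin so the constant term vanishes, and to clear the coefficient of $y$; a Tschirnhaus-type shift of $y$ by a linear form in $x$, and a shift and scaling in $x$, are the natural moves. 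Matching the reduced equation monomial-by-monomial to
\[ (x+b)y^3+(cx+d)y^2+(ex^2+fx)y+x^3+kx^2+lx=0 \]
then gives the stated form. The secondary technical point I would check is the bookkeeping: that the residual group is exactly large enough to eliminate precisely the absent monomials and no more, so that the seven parameters $b,c,d,e,f,k,l$ survive.
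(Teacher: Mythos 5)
Your analysis is essentially correct, and the ``obstacle'' you flag is a genuine error in the paper's statement and proof rather than a gap in your own argument. The paper's proof of the first assertion takes a Weierstrass point $P$ of a non-hyperelliptic genus $3$ curve, produces $x$ with pole divisor $3P$, and then asserts that ``there is also a meromorphic function $y$ which has $P$ as a pole of order $4$ and no other poles.'' As you observe, Riemann--Roch gives $\dim \codeL(4P)=2+\dim\codeL(K-4P)$, so such a $y$ exists only when $K\sim 4P$, i.e.\ when $P$ is a hyperflex of the canonical quartic model; at an ordinary flex the gap sequence is $\{1,2,4\}$ and $\codeL(4P)=\codeL(3P)$. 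Since a generic smooth plane quartic has $24$ ordinary flexes and no hyperflex, the first assertion fails for the generic genus $3$ curve, and your identification of the correct scope (non-hyperelliptic curves carrying a point with Weierstrass semigroup $\langle 3,4\rangle$, equivalently a hyperflex) is the right repair. Your point about the hyperelliptic case is also well taken: the paper dismisses it as ``obvious,'' implicitly via the earlier corollary that hyperelliptic curves are $C_{a,b}$ curves, but that corollary realizes them only as $C_{2,2g+1}$ curves, and a genus $3$ curve of type $C_{3,4}$ in the sense of Definition~\ref{cab} is a smooth plane quartic, hence never hyperelliptic.

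For the second assertion the paper simply cites \cite{gen3}, and your projective normalization at an ordinary flex is essentially that argument; your observation that it needs only an \emph{ordinary} flex (so is unconditional for smooth plane quartics in characteristic $0$) is correct, and it also explains a quiet inconsistency in the statement: the degree-$4$ part of Eq.~\eqref{eq} is the single monomial $xy^3$, so \eqref{eq} is the normal form attached to an ordinary flex at $[1:0:0]$ and is not itself an equation of the shape required by Definition~\ref{cab} (the monomial $xy^3$ has $3i+4j=15>12$). The only loose end in your write-up is the residual-group bookkeeping at the end, which you rightly flag yourself (seven parameters $b,\dots,l$ against $\dim\M_3=6$ leave a one-dimensional residual action, so the form is not unique), but this does not affect the substance. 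In short: your proposal does not reproduce the paper's proof because the paper's proof of the first claim cannot be completed as written; the step producing $y\in\codeL(4P)\setminus\codeL(3P)$ at an arbitrary Weierstrass point is the precise point of failure, and you have located it correctly.
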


\proof The case of hyperelliptic curves is obvious. Hence, we focus on non-hyperelliptic genus 3 curves. Let
$C$ be a non-hyperelliptic genus 3 curve, $P$ be a Weierstrass point on $C$, and $K$ the function field of
$C$. Then exists a meromorphic function $x$ which has $P$ as a triple pole and no other poles. Thus, $[K :
L(x)]=3$. Consider $x$ as a mapping of $C$ to the Riemann sphere. We call this mapping $\psi: C \to \P^1$
and let $\infty$ be $\psi(P)$. From the Riemann-Hurwitz formula we have that $\psi$ has at most 8 other
branch points. There is also a meromorphic function $y$ which has $P$ as a pole of order 4 and no other
poles. Thus the equation of $K$ is given by
\begin{equation} \label{eq_g3}
f(x,y):=\g_1(x)\, y^3 + \g_2(x)\, y^2 + \g_3(x)\, y +\g_0 (x)=0
\end{equation}
where $\g_0 (x), \dots , \g_3 (x) \in L[x]$ and
\[ deg(\g_0)=4, \ \deg (\g_1)=0,  \  \deg (\g_2)\leq 2, \ \deg (\g_3)\leq 3. \]
The discriminant of $f(x,y)$ with respect to $y$
$$D(f,y):=-27\, (\g_1 \, \g_0)^2 + 18\, \g_0\, \g_1\, \g_2\, \g_3 + (\g_2\, \g_3)^2 - 4\, \g_0 \, \g_2^3
- 4\, \g_1\, \g_3^3, $$
must have at most degree 8 since its roots are the branch points of $\psi:\C \to \P^1$. 

Thus, we have
\[
\begin{split}
& \deg \, (\g_3 \g_2) \leq 4, \\
& \deg\, (\g_0 \g_2^3) \leq 8, \\
&  \deg \, (\g_3^3\, \g_1)\leq 8. 
\end{split}
\]
If $deg\, (\g_2)=2$ then $deg \, (\g_3)\leq 2$ and $deg \, (\g_0)=0$. Thus, $deg\, (f,x)=2$. Then, $f(x,y)=0$
is not the equation of an genus 3 curve. Hence, $deg \, (\g_2) \leq 1$. Clearly, $deg\, (\g_3)\leq 1$. We
denote:
\begin{equation}
\begin{split} 
\g_1 (x):= & \, a, \\
 \g_2 (x):= &\, cx+d\\
\g_3 (x):= & \, ex+f, \\
 \g_0 (x):= & \,gx^4+ hx^3+kx^2+lx+m\\
\end{split}
\end{equation}
Then, we have
\[ f(x,y)=ay^3+(cx+d)y^2+(ex+f)y+ (g x^4+h x^3+k x^2+l x+m)=0\]
which is obviously an $C_{3, 4}$ curve. This completes the proof of the first statement.

Let $C$ be a $C_{3,4}$ curve defined over $k$. Then, $C$ is a non-hyperelliptic genus 3 curve. Hence, it is
isomorphic over $k$ to a curve with equation as in Eq.~\eqref{eq}; see \cite{sh-15} for details. This
completes the proof.

\qed

Hence, the space of $C_{3, 4}$ curves correspond to the moduli space $\M_3$. It is an interesting problem to
see what happens for higher genus $g$.

\section{Codes obtained from $C_{a,b}$ curves}
%%%%%%%%%%%%%%%%%%%%%%%%%%%%%%%%%%%%%%%%%%%%%%%%%%%%%%%%%%%%
In this section we will give examples of codes which are constructed based on $C_{ab}$ curves.  We will focus
on three curves, namely $y^3= x^4+1$, $y^3 = x^4-x$, and $y^3-y= x^4$. All these curves are genus 3
non-hyperelliptic curves. For characteristic $p > 7$ these curves have  automorphism group   isomorphic to a
group with GAP identity (48, 33), (9,1), and (96, 64) respectively; see \cite{sh-3} for details. Recall that an
$[n,k, d]$ code with $d=n-k+1$ is called \textit{maximum distance separable} code or an MDS code.

\subsection{The curve $y^3 = x^4+1$}
%*********************************************
Let $\X$ be the curve \[ y^3 = x^4+1\] defined over $\F_q$. This is a $C_{3,4}$ curve of genus $3$.  For
characteristic $p \neq 2,3 $ the  automorphism group of $\X$ is $C_4 \rtimes A_4$, which has Gap identity
$(48, 33)$. We denote the set of affine rational points of $\X$ over $\F_q$ by $\{P_1, \dots , P_n\}$. Let
$C= C_\L(D,G)$, where $n+1$ is the number of rational points of $\X$ and
\[ G = m P_\infty, \ \ D= P_1 + \cdots P_n \]
We have the following result:
\begin{theorem}
For the permutation automorphism group $\PAut (C)$,    one has

i)   If  $\ 0 \leq  m < 3$ or $   m > n+4$ then $\PAut (C) \iso S_n$.

ii)   If $n>24$ and $4\leq m < n/2 $ then $\PAut (C) \iso Aut_{D, m P_\infty}(\X)$.
\end{theorem}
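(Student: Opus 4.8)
The plan is to treat the two statements separately, handling (i) by computing the code directly and (ii) by reducing to Theorem \ref{iso}.

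For part (i), I would first determine $\codeL(mP_\infty)$ from Proposition \ref{basis}. With $a=3$, $b=4$ a basis consists of the monomials $x^iy^j$ with $0 \le j \le 2$, $i \ge 0$ and $3i + 4j \le m$. For $0 \le m < 3$ the only admissible pair is $(i,j)=(0,0)$, so $\codeL(mP_\infty)$ is the space of constants and $C = \set{(c,\dots,c) \mid c \in \F_q}$ is the repetition code; every coordinate permutation fixes each such word, so $\PAut(C) \iso S_n$. For $m > n+4$ I would instead use the dimension formula $\dim C = \dim \codeL(G) - \dim \codeL(G-D)$. Since $m > n+4 > 2g-2 = 4$ and $\deg(G-D) = m - n > 2g-2$, Riemann--Roch gives $\dim \codeL(G) = m-2$ and $\dim \codeL(G-D) = m-n-2$, whence $\dim C = n$ and $C = \F_q^n$. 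Again $\PAut(C) \iso S_n$, since every permutation preserves the full space.

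For part (ii), the point is that $\X: y^3 = x^4+1$ is an admissible $C_{3,4}$ curve with $(x)_\infty = 3P_\infty$ and $(y)_\infty = 4P_\infty$, so in the notation of Theorem \ref{iso} we have $k = 3$, $l = 4$, $g = 3$, and $l > k$. I would then verify the numerical hypotheses of that theorem. Since $m \ge 4$ one has $y \in \codeL(mP_\infty)$, so $\beta = \min\set{k-1,\,1} = 1$ and $k(l + (k-1)/\beta) = 18$; moreover $lk(1 + (k-1)/(m-k+1)) = 12(1 + 2/(m-2))$, which is maximized at $m=4$ with value $24$, while $2g+2 = 8$. The bound $n > 2m$ follows from $m < n/2$, and $n > 24$ dominates the remaining terms, so the hypothesis $n > \max\set{2g+2,\,2m,\,18,\,12(1+2/(m-2))}$ holds. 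Theorem \ref{iso} then yields $\GAut(C) \iso Aut_{D, mP_\infty}(\X)$.

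It remains to pass from $\GAut(C)$ to $\PAut(C)$. Here I would use the natural map sending $\sigma \in Aut_{D, mP_\infty}(\X)$ to the coordinate permutation it induces on $\set{P_1,\dots,P_n}$; since $\sigma$ fixes $D$ and $G = mP_\infty$, this is a well-defined element of $\PAut(C)$, and it is injective because a nontrivial automorphism of a genus $3$ curve cannot fix all $n > 8 = 2g+2$ of the $P_i$. This gives a chain $Aut_{D, mP_\infty}(\X) \hookrightarrow \PAut(C) \le \MAut(C) \le \GAut(C)$, whose two ends are finite groups of the same order by Theorem \ref{iso}; hence every inclusion is an equality and $\PAut(C) \iso Aut_{D, mP_\infty}(\X)$. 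The main obstacle is this final sandwich step: one must check that the embedding $Aut_{D, mP_\infty}(\X) \hookrightarrow \PAut(C)$ is compatible with the isomorphism furnished by Theorem \ref{iso}, so that matching orders genuinely forces equality throughout, which relies on knowing that the abstract isomorphism of Theorem \ref{iso} is realized precisely by the geometric action of curve automorphisms on the evaluation points. The numerical verification is routine once $\beta$ is computed, but the value of $\beta$ (and hence the constant $18$) is the delicate arithmetic input.
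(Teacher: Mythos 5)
Your proposal is correct and follows essentially the same route as the paper: part (i) by showing $C$ is the repetition code (resp.\ the full space $\F_q^n$) via the dimension formula, and part (ii) by verifying the numerical hypotheses of Theorem \ref{iso} with $k=3$, $l=4$, $g=3$, $\beta=1$, so that the binding constraint is $12(1+2/(m-2))\leq 24$. The only divergence is your final sandwich step from $\GAut(C)$ down to $\PAut(C)$: the paper reads Theorem \ref{iso} (following Wesemeyer) as a statement about the permutation automorphism group and applies it directly, and in any case your cardinality argument already closes the gap without the compatibility check you flag as the main obstacle, since a finite group squeezed between two subgroups of equal order must coincide with both.
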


\begin{proof}
i)  If $0 \leq  m < 3$, then from Proposition \ref{basis} we know $(1, 1, \cdots, 1)$ is a basis of the vector
space $\L(m\cdot P_\infty)$ , thus $\dim G=1$. Since $\dim (G-D)\geq 0,~\dim C\geq 1$, together with $\dim C=
\dim G - \dim (G-D)$ we have  $\dim C=1$. Therefore $\PAut (C) \iso S_n$.

If $   m > n+4$, then $deg(G-D)>2g-2$. Thus $\dim C= \dim G - \dim (G-D)=n.$ $C$ is the full space, therefore
$\PAut (C) \iso S_n$.

ii) With the notation of definition \ref{admiss}. In this case $k=3,~l=4,~g=3,~\beta=1,~m\geq 4$. For Theorem
\ref{iso} to hold we need
\[n>   \max   \left\{8, 2m, 18, 12  \left(1+\frac{2}{m-2} \right)      \right\}.\]

Since $m\geq 4$, $12(1+\frac{2}{m-2})\leq
24$. Thus when $n>24$ and $4\leq m < n/2 $ Theorem \ref{iso} applies and $\PAut (C) \iso Aut_{D, m P_\infty}(\X)$.

\end{proof}

It can be seen from the proof of theorem that $C_\L(D,G)$ is a $[n, 1, n]$ MDS code when $0 \leq  m < 3$, and a $[n, n, 1]$ MDS code when  $   m > n+4$.

\begin{example}
Let $\X$ be defined over $F_{2^3}$. Take $m=4$. By computation using GAP, we find that $C_\L(D,G)$ is a
$[8, 3, 6]$   MDS code with a generator matrix $$ \left(\begin{array}{cccccccc}
         \alpha^5 & \alpha^3& \alpha^6& 1 & \alpha^4 & \alpha & \alpha^2& 0\\
           \alpha^3 & \alpha^6& \alpha^5& 0 & \alpha^2 & \alpha^4 & \alpha& 1\\
           1 & 1 & 1& 1 & 1  & 1& 1 & 1\\
        \end{array}\right),$$ where $\alpha$ is a primitive
        element of $F_{2^3}$. The permutation automorphism group is $\PAut (C_\L(D,G) ) \iso Z_{14}$.
\end{example}

\subsection{The curve $y^3 = x^4-x$}
Let $\X$ be the curve \[ y^3 = x^4-x\] defined over $\F_q$.  For characteristic $p>7$ the automorphism group
of $\X$ is the cyclic group of order 9.  Denote the set of affine rational points of $\X$ over $\F_q$ by
$\{P_1, \dots , P_n\}$. Let $C= C_\L(D,G)$, where $n+1$ is the number of rational points of $\X$ and
\[ G = m P_\infty, \ \ D= P_1 + \cdots P_n \]
We have the following result:
\begin{theorem}
For the permutation automorphism group $\PAut (C)$,    one has

i)  If  $0 \leq  m < 3$ or $   m > n+4$ then $\PAut (C) \iso S_n$.

ii) If $n>24$ and $4\leq m < n/2 $ then $\PAut (C) \iso Aut_{D, m P_\infty}(\X)$.

\end{theorem}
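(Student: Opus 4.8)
The plan is to observe that $y^3 = x^4 - x$ is a $C_{3,4}$ curve of genus $g = 3$, exactly as in the previous subsection, so that the relevant admissibility invariants are identical: the functions $x$ and $y$ have pole orders $k = 3$ and $l = 4$ at $P_\infty$ (cf. Definition \ref{admiss}), and $g = 3$. Since it is these numbers, and not the specific coefficients of the defining equation, that feed into both Proposition \ref{basis} and Theorem \ref{iso}, the whole argument will run in parallel with the theorem for $y^3 = x^4 + 1$. In particular the different shape of $\Aut(\X)$ (cyclic of order $9$ rather than $C_4 \rtimes A_4$) is irrelevant here, since it is absorbed into the term $Aut_{D, m P_\infty}(\X)$ and need not be computed.

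For part (i) I would first treat $0 \leq m < 3$. By Proposition \ref{basis} a basis of $\codeL(m P_\infty)$ consists of the monomials $x^i y^j$ with $0 \leq j \leq 2$, $i \geq 0$ and $3i + 4j \leq m$; for $m \leq 2$ the only such monomial is the constant $1$, so $\dim G = 1$. Because $\deg(G - D) = m - n < 0$ forces $\dim(G - D) = 0$, the identity $\dim C = \dim G - \dim(G - D)$ gives $\dim C = 1$, and the code is spanned by the all-ones word; every coordinate permutation fixes it, so $\PAut(C) \iso S_n$. For $m > n + 4$ I would note that $\deg(G - D) = m - n > 4 = 2g - 2$, apply Riemann-Roch (part (2) of the opening lemma of Section 2) to both $\dim G$ and $\dim(G - D)$, and conclude $\dim C = n$; the code is then all of $\F_q^n$, whose permutation group is $S_n$.

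For part (ii) the key preliminary computation is $\beta = \min\{\, k - 1,\ \min\{r : y^r \in \codeL(m P_\infty)\}\,\}$. Since $y$ has pole order $l = 4 \leq m$ whenever $m \geq 4$, already $r = 1$ qualifies, whence $\beta = \min\{2, 1\} = 1$. I would then record that the curve is admissible with $l = 4 > k = 3$ and $m \geq l = 4$, which are exactly the hypotheses of Theorem \ref{iso}. Substituting $k = 3$, $l = 4$, $g = 3$, $\beta = 1$ into the bound of that theorem, the four quantities become $2g + 2 = 8$, $2m$, $k(l + \frac{k-1}{\beta}) = 18$, and $lk(1 + \frac{k-1}{m - k + 1}) = 12(1 + \frac{2}{m-2})$. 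For $m \geq 4$ the last is at most $24$, and $m < n/2$ gives $2m < n$; hence $n > 24$ makes all four bounds strictly smaller than $n$, so Theorem \ref{iso} applies and yields $\PAut(C) \iso Aut_{D, m P_\infty}(\X)$.

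I do not expect a genuine obstacle: the statement is essentially a restatement of the previous one, because the parameters $(a, b, g) = (3, 4, 3)$ and the derived constants $k, l, \beta$ coincide. The only points requiring care are the clean evaluation of $\beta$ and the verification of the admissibility hypotheses $l > k$ and $m \geq l$, both of which are immediate. The genus-$3$, $C_{3,4}$ structure does all the work, independently of which of the three special curves one starts from.
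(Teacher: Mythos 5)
Your proposal is correct and follows essentially the same route as the paper: Proposition \ref{basis} to get $\dim C = 1$ (resp.\ the Riemann--Roch count to get $\dim C = n$) for part (i), and substitution of $k=3$, $l=4$, $g=3$, $\beta=1$ into the bound of Theorem \ref{iso} for part (ii). Your version is in fact slightly more careful than the paper's, since you explicitly justify $\dim(G-D)=0$ via $\deg(G-D)<0$ and spell out the computation of $\beta$, but the argument is the same.
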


\begin{proof}
i)  If $0 \leq  m < 3$, then from Proposition \ref{basis} we know $(1, 1, \cdots, 1)$ is a basis of the vector
space $\L(m\cdot P_\infty)$ , thus $\dim G=1$. Since $\dim (G-D)\geq 0,~\dim C\geq 1$, together with $\dim C=
\dim G - \dim (G-D)$ we have  $\dim C=1$. Therefore $\PAut (C) \iso S_n$.

If $   m > n+4$, then $deg(G-D)>2g-2$. Thus $\dim C= \dim G - \dim (G-D)=n.$ $C$ is the full space, therefore
$\PAut (C) \iso S_n$.

ii) With the notation of definition \ref{admiss}. In this case $k=3,~l=4,~g=3,~\beta=1,~m\geq 4$. For Theorem
\ref{iso} to hold we need
\[
n > \max \left\{    8, 2m, 18, 12 \left(   1 + \frac{2} {m-2} \right)  \right\}
\] 
Since $m\geq 4$ and  $12  \left(1+\frac{2}{m-2} \right) \leq 24$. Thus when $n>24$ and $4\leq m < n/2 $ Theorem \ref{iso} applies and $\PAut (C) \iso Aut_{D, m P_\infty}(\X)$.
\end{proof}
 
It can be seen from the proof of theorem that $C_\L(D,G)$ is a $[n, 1, n]$ MDS code when $0 \leq  m < 3$, and a $[n, n, 1]$ MDS
code when  $   m > n+4$.

\begin{example}
Let $\X$ be defined over $F_{2^3}$. Take $m=3$. By computation using GAP, we find that $C_\L(D,G)$ is a $[8,
2, 7]$ code with permutation automorphism group $[56, 11]$(Gap identity), which is clearly an MDS code.
\end{example}

%***********************
\subsection{The curve $y^3-y= x^4$}
Let $\X$ be the curve \[ y^3-y= x^4\] defined over $\F_q$. Denote the set of affine rational points of $\X$ over
$\F_q$ by $\{P_1, \dots , P_n\}$. Let $C=C_\L(D,G)$, where $n+1$ is the number of rational points of $\X$ and
\[ G = m P_\infty, \ \ D= P_1 + \cdots P_n \]

We have the following;
\begin{theorem}
For the permutation automorphism group $\PAut (C)$,    one has

i)  If  $0 \leq  m < 3$ or $   m > n+4$ then $\PAut (C) \iso S_n$.

ii) If $n>24$ and $4\leq m < n/2 $ then $\PAut (C) \iso Aut_{D, m P_\infty}(\X)$.
\end{theorem}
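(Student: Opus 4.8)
The plan is to argue exactly as in the preceding two theorems, since $y^3 - y = x^4$ is again a $C_{3,4}$ curve of genus $g = 3$ with the same pole orders, so the numerical input to Theorem~\ref{iso} is unchanged. I would first record that this curve is admissible (it is a $C_{a,b}$ curve, so the corollary following Proposition~\ref{basis} applies), with $(x)_\infty = 3 P_\infty$ and $(y)_\infty = 4 P_\infty$; thus in the notation of Definition~\ref{admiss} one has $k = 3$, $l = 4$, and $l > k$ as required.

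For part (i) I would use a dimension count based on Proposition~\ref{basis}. When $0 \le m < 3$, the only monomial $x^i y^j$ with $0 \le j \le 2$, $i \ge 0$, and $3i + 4j \le m$ is the constant $1$, so $\codeL(m P_\infty)$ is one-dimensional and $\dim C = \dim G - \dim(G-D) = 1$; the code is then spanned by the all-ones vector $(1,\dots,1)$, which is fixed by every coordinate permutation, giving $\PAut(C) \iso S_n$. When $m > n+4$, one has $\deg(G-D) = m - n > 4 = 2g-2$, so $\dim(G-D) = \deg(G-D) + 1 - g$ and hence $\dim C = n$; the code is all of $\F_q^n$ and again $\PAut(C) \iso S_n$.

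For part (ii) I would compute the invariant $\beta = \min\{k-1,\ r \mid y^r \in \codeL(m P_\infty)\}$. Since $y^r$ has pole order $4r$ and $m \ge 4$, the value $r = 1$ is admissible, so $\beta = 1$. Substituting $k = 3$, $l = 4$, $g = 3$, $\beta = 1$ into the bound of Theorem~\ref{iso} turns the requirement into
\[ n > \max\{8,\ 2m,\ 18,\ 12(1 + \tfrac{2}{m-2})\}. \]
Because $m \ge 4$ forces $12(1 + \tfrac{2}{m-2}) \le 24$, the hypotheses $n > 24$ and $4 \le m < n/2$ (the latter giving $2m < n$) make all four inequalities hold, so Theorem~\ref{iso} yields $\PAut(C) \iso Aut_{D, m P_\infty}(\X)$.

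The computation is entirely routine, so there is no genuine obstacle; the only point requiring care is confirming that the pole-order data of this particular curve really gives $k = 3$, $l = 4$, and $\beta = 1$, since these are exactly what make the numerical bound collapse to the clean threshold $n > 24$. Everything else is a direct transcription of the arguments already given for $y^3 = x^4 + 1$ and $y^3 = x^4 - x$.
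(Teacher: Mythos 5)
Your proposal is correct and follows essentially the same route as the paper: part (i) by the dimension count $\dim C = \dim G - \dim(G-D)$ (giving the one-dimensional all-ones code for $0 \le m < 3$ and the full space for $m > n+4$), and part (ii) by substituting $k=3$, $l=4$, $g=3$, $\beta=1$ into the bound of Theorem~\ref{iso} and noting that $m \ge 4$ collapses the last term to at most $24$. The only difference is that you spell out a few details the paper leaves implicit (the Riemann--Roch computation for $m > n+4$ and the verification that $\beta = 1$), which is harmless.
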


\begin{proof}
i)  If $0 \leq  m < 3$, then from Proposition \ref{basis} we know $(1, 1, \cdots, 1)$ is a basis of the vector
space $\L(m\cdot P_\infty)$ , thus $\dim G=1$. Since $\dim (G-D)\geq 0,~\dim C\geq 1$, together with $\dim C=
\dim G - \dim (G-D)$ we have  $\dim C=1$. Therefore $\PAut (C) \iso S_n$.

If $   m > n+4$, then $deg(G-D)>2g-2$. Thus $\dim C= \dim G - \dim (G-D)=n.$ $C$ is the full space, therefore
$\PAut (C) \iso S_n$.

ii) With the notation of Definition \ref{admiss}. In this case $k=3,~l=4,~g=3,~\beta=1,~m\geq 4$. For Theorem
\ref{iso} to hold we need
\[
n> \max \left\{8, 2m, 18, 12(1+\frac{2}{m-2}) \right\}.
\] 
Since $m\geq 4$, $12(1+\frac{2}{m-2})\leq
24$. Thus when $n>24$ and $4\leq m < n/2 $ Theorem \ref{iso} applies and $\PAut (C) \iso Aut_{D, m
P_\infty}(\X)$.

\end{proof}

It can be seen from the proof of theorem that $C_\L(D,G)$ is a $[n, 1, n]$ MDS code when $0 \leq  m < 3$,
and a $[n, n, 1]$ MDS code when  $   m > n+4$.

\begin{example}
Let $\X$ be defined over $F_{2^2}$. Take $m=6$. By computation using GAP, we find that $C_\L(D,G)$ is a
$[4, 4, 1]$ code with a generator matrix
$$ \left(\begin{array}{cccc}
         \alpha & \alpha^2& 0& 0\\
           \alpha^2 & \alpha& 0 & 0\\
        \alpha & \alpha^2& 1& 0\\
           1 & 1 & 1& 1 \\
        \end{array}\right),$$ where $\alpha$ is a primitive
        element of $F_{2^2}$.
The permutation automorphism group is isomorphic to the group with GAP identity $[24, 12]$. In this case
\[ \PAut(C)  \hookrightarrow    \Aut (\X). \]
This code is clearly an MDS code.
\end{example}

\section{Concluding remarks}
It is an open question to determine the automorphism groups of AG-codes obtained by $C_{a, b}$ curves, in all
characteristics. Moreover, even determining the list of automorphism groups of $C_{a, b}$ curves seems to be
a non-trivial problem.

Furthermore, to determine the locus of $C_{a, b}$  curves, defined over $\C$, in the moduli space $\M_g$,
where $g = \frac {(a-1)(b-1)} 2$, seems an interesting problem in its own right. A $C_{a, b}$ curve has
covers of degree $a$ and $b$ to $\P^1$. One has to take such covers in the most generic form. The space of
$C_{a, b}$ curves in $\M_g$ will be the intersection of the corresponding Hurwitz spaces. To generalize this
for any $a, b$ would require a careful analysis of the corresponding Hurwitz spaces.

Since the first version of this note, considerable progress has been made with superelliptic curves.  Such curves are well understood and their automorphism groups fully determined in all characteristics different from two, due to work of Sanjeewa \cite{sa}. Moreover, for all such groups we can determine the equation of the corresponding curve \cite{sa-1}.  For curve with extra automorphisms such equations can be determined over the minimal field of definition due to work of Beshaj/Thompson \cite{b-th} and Hidaldo/Shaska \cite{h-sh}.  Furthermore, due to work of Beshaj such equations over the minimal field of definition can even be chosen with minimal coefficients \cite{bin}.

It is still unknown whether a precise relation exists between the automorphism group of the curve and the automorphism group of the Ag-codes, even in the case of superelliptic curves whose automorphism groups are well understood. 

%*********************************************************************************************************
%\nocite{*}

\bibliographystyle{amsplain} 

\bibliography{arxiv}{}

\end{document}